\newtheorem{theorem}{Theorem}[section]
\newtheorem{definition}[theorem]{Definition}
\newtheorem{example}[theorem]{Example}
\newtheorem{lemma}[theorem]{Lemma}
\newtheorem{postulate}{Postulate}
\newtheorem{remark}[theorem]{Remark}
\renewcommand\a{\ensuremath{\bar a}}
\newcommand{\cI}{{\cal I}}
\newcommand{\cS}{{\cal S}}
\newcommand\D{\ensuremath{\Delta}}
\newcommand\false{\ensuremath{\textsf{false}}}
\newcommand\s{\ensuremath{\sigma}}
\newcommand\set[1]{\ensuremath{\{#1\}}}
\renewcommand{\phi}{\varphi}
\newcommand\qef{\hfill$\triangleleft$} 
\newcommand\qefhere{\tag*{$\triangleleft$}} 
\newcommand\true{\ensuremath{\textsf{true}}}
\newcommand\U{\ensuremath{\Upsilon}}
\newcommand\undef{\ensuremath{\textsf{undef}}}
\newcommand\V{\ensuremath{\mathcal V}}
\title{A more abstract bounded exploration postulate}
\author{Yuri Gurevich and Tatiana Yavorskaya}
\date{}
\begin{document}
\maketitle

\begin{abstract}
In article ``Sequential abstract state machines capture
sequential algorithms'', one of us axiomatized sequential algorithms by means of three postulates: sequential time, abstract state, and bounded exploration postulates.
Here we give a more abstract version of the bounded exploration postulate which is closer in spirit to the abstract state postulate.
In the presence of the sequential time and abstract state postulates, our postulate is equivalent to the origingal bounded exploration postulate.
\end{abstract}

\section{Introduction}

This paper is essentially an oversized footnote to \cite{G141} where sequential algorithms are axiomatized by means of three postulates: sequential time, abstract state, and bounded exploration postulates.
To make our exposition more self-contained, we restate the postulates.

\begin{definition}[Sequential Algorithms]\rm
A \emph{sequential algorithm} $A$ is defined by means of the sequential time, abstract state, and bounded exploration postulates below. \qef
\end{definition}

\begin{postulate}[Sequential Time]\rm
$A$ is associated with a nonempty set $\cS(A)$ (of {\em states}), a nonempty subset $\cI(A)$ (of {\em initial states}), and a map $\tau_A :\cS(A)\to\cS(A)$ (the {\em one-step transformation}). \qef
\end{postulate}

We write simply $\cS, \cI$, and $\tau$ when $A$ is clear from the context.
The original version of the postulate in \cite{G141} did not require that $\cS$ and $\cI$ be nonempty.  This reasonable modification is due to \cite{G166}.

\begin{postulate}[Abstract State]\mbox{}\rm
\begin{itemize}
\item
The states are first-order structures of the same vocabulary \U\ (or $\U(A)$, the \emph{vocabulary} of $A$), and $\cS, \cI$ are closed under isomorphisms.
\item
The one-step transformation $\tau$ does not change the base set of any state, and any isomorphism from a state $X$ to a state $Y$ is also an isomorphism from $\tau(X)$ to $\tau(Y)$. \qef
\end{itemize}
\end{postulate}

Symbols in \U\ are function symbols; relation symbols are viewed as function symbols whose interpretations taking Boolean values.
Each function symbol $f$ has some number $j$ of argument places; $j$ is the \emph{arity} of $f$. The arity may be zero; in the logic literature nullary functions symbols are often called constants.
Terms are built as usual from nullary symbols but means of symbols of positive arity.

The vocabulary \U\ contains \emph{logical symbols} \true, \false, \undef, the equality sign, and the standard propositional connectives; the other symbols in \U\ are \emph{nonlogical}.
In any (first-order) \U\ structure, the values of \true, \false, and \undef\ are distinct \emph{logical elements};
the other elements are \emph{nonlogical}.

Let $X$ be a state, $f$ range over \U, $j$ be the arity of $f$, and $x_0, x_1, \dots, x_j$ range over the base set $|X|$ of $X$.
A triple $(f,(x_1,\dots,x_j),x_0)$ is an \emph{update} of $X$ if $x_0 = f(x_1,\dots,x_ja)$ in $\tau(X)$. An update $(f,\a,b)$ is \emph{nontrivial} if $x_0 \ne f(x_1,\dots,x_j)$ in $X$.
The \emph{update set} $\D(X)$ of the algorithm $A$ at state $X$ is the set of nontrivial updates of $X$.

States $X$ and $Y$ {\em coincide} over a set $T$ of  \U\ terms, symbolically $X \overset T= Y$, if every $t\in T$ has the same value in $X$ and $Y$.

\begin{postulate}[Bounded Exploration]
There exists a finite set $T$ of closed \U\ terms (called \emph{a bounded exploration witness}) such that
\[ X\overset T= Y \implies \D(X) = \D(Y)\quad
\text{for all states }X,Y. \qefhere \]
\end{postulate}

If $T\subseteq T'$ are sets of terms and $T$ is a bounded exploration witness, then $T'$ is also a bounded exploration witness.
For example, $T'$ could comprise $T$ and all subterms of terms in $T$, so that $T'$ is closed under subterms.

This completes the definition of sequential algorithms.

\bigskip
The bounded exploration postulate arguably contradicts the spirit of the abstract state postulate according to which a state is just
a presentation of its isomorphism type so that only the isomorphism type of the state is important.  In the bounded exploration postulate above, it is essential that the states $X$ and $Y$ are concrete.

In technical report \cite{G177} we gave a more abstract form of the bounded exploration postulate that is in the spirit of the abstract state postulate and that is equivalent to the original bounded exploration postulate in the presence of the sequential time and abstract state postulates.
We republish the relevant part (Part~1) of the technical report here (with slight modifications) to make it easier to access.

\section{The new bounded exploration postulate}

If $f$ is a function symbol in $\U$, $X$ is a state of $A$, $t$ is an $\U$ term, and $T$ is a set of \U\ terms,
then $f_X$ is the interpretation of $f$ in $X$, $\V_X(t)$ is the value of $t$ in $X$, and $\V_X(T) = \set{\V_X(t)\ :\ t\in T}$.

\begin{definition}\rm
States $X,Y$ are \emph{$T$-similar} if
\[
 \V_X(s) = \V_X(t) \iff \V_Y(s) = \V_Y(t)\quad
 \text{for all }s,t\in T. \qefhere
\]
\end{definition}

If states $X,Y$ are $T$-similar then
\begin{equation}\label{sf}
 \s(\V_X(t)) = \V_Y(t)
\end{equation}
is a bijection, the \emph{similarity function}, from $\V_X(T)$ to $\V_Y(T)$. \qef

\begin{lemma}
Suppose that $T$ is closed under subterms, $X$ and $Y$ are $T$-similar, and let term $t = f(t_1,\ldots,t_j)\in T$ and $x_i = \V_X(t_i)$ for $i=1,\dots,j$. Then
\begin{equation}\label{iso}
 \s(f_X(x_1,\dots,x_j)) = f_Y(\s(x_1),\dots,\s(x_j)).
\end{equation}
\end{lemma}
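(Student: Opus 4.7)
The plan is to unwind both sides of (\ref{iso}) using the recursive definition of term evaluation and the definition of the similarity function $\s$, and to observe that both sides are literally equal to $\V_Y(t)$.

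First I would verify the domains. Because $T$ is closed under subterms and $t = f(t_1,\dots,t_j) \in T$, each $t_i$ lies in $T$, so $x_i = \V_X(t_i) \in \V_X(T)$, which means $\s(x_i)$ is defined and equals $\V_Y(t_i)$ by the defining equation (\ref{sf}). Also $\V_X(t) \in \V_X(T)$, so $\s$ can be applied to $\V_X(t)$.

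Next I would compute the left-hand side: by the recursive definition of term evaluation,
\[
  f_X(x_1,\dots,x_j) \;=\; f_X(\V_X(t_1),\dots,\V_X(t_j)) \;=\; \V_X(f(t_1,\dots,t_j)) \;=\; \V_X(t),
\]
and hence $\s(f_X(x_1,\dots,x_j)) = \s(\V_X(t)) = \V_Y(t)$ by (\ref{sf}). For the right-hand side, using $\s(x_i) = \V_Y(t_i)$ obtained above,
\[
  f_Y(\s(x_1),\dots,\s(x_j)) \;=\; f_Y(\V_Y(t_1),\dots,\V_Y(t_j)) \;=\; \V_Y(f(t_1,\dots,t_j)) \;=\; \V_Y(t).
\]
The two sides therefore coincide.

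There is no real obstacle here; the lemma is essentially a bookkeeping check that the similarity function commutes with function application on elements named by terms of $T$. The only point that requires attention is ensuring $\s$ is only applied to elements actually lying in its domain $\V_X(T)$, which is precisely where the hypothesis that $T$ is closed under subterms enters. The well-definedness of $\s$ itself (on which (\ref{sf}) implicitly depends) is already guaranteed by $T$-similarity, since $\V_X(s) = \V_X(t)$ forces $\V_Y(s) = \V_Y(t)$ whenever $s,t \in T$.
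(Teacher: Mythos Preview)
Your proof is correct and follows essentially the same route as the paper: both arguments reduce each side of \eqref{iso} to $\V_Y(t)$ via the recursive definition of term evaluation and the defining property \eqref{sf} of $\s$. Your version is slightly more explicit in checking that $\s$ is only applied to elements of $\V_X(T)$ (using closure of $T$ under subterms), which the paper leaves implicit.
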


\begin{proof}
\begin{multline*}
 \s(f_X(x_1,\dots,x_j)) =  \s(f_X(\V_X(t_1),\ldots,\V_X(t_j))) = \s(\V_X(t)) =\\
 \V_Y(t) =  f_Y(\V_Y(t_1),\ldots,\V_Y(t_j))  = f_Y (\s(x_1),\ldots, \s(x_j)). \qedhere
\end{multline*}
\end{proof}

\begin{remark}\rm
One may think that, under the hypotheses of Lemma~\ref{iso}, $\s$ is a partial isomorphism from $X$ to $Y$, so that \eqref{iso} holds
whenever $x_1,\dots,x_j$ and $f_X(x_1,\ldots,x_j)$ are in the domain of $\s$.
But this is not necessarily true.  For example, let $a,b$ be nonlogical nullary symbols and $f$ a unary functional symbol in \U.  Set $T = \{a,b\}$ and consider states $X$
and $Y$ with nonlogical elements $1,2,3$ where $f_X(1) =
f_Y(1) = 2$, $f_X(2) = f_Y(2) = 3$, $f_X(3)= f_Y(3) = 1$, and
\[ a_X = a_Y = 1,\quad  b_X = 2,\quad  b_Y = 3. \]
The states $X$ and $Y$ are $T$-similar; in both cases the values of
$a,b$ are distinct.  But $\s$ is not a partial isomorphism because $\s(f_X(1)) = \s(2) = \s(b_X) = b_Y = 3 $ while
$f_Y(\s(1)) = f_Y(\s(a_X)) = f_Y(a_Y) = 2$.\qef
\end{remark}

\begin{definition}\rm
An element $x$ of state $X$ is \emph{$T$-accessible} if
$x = \V_X(t)$ for some $t \in T$.  An update $u = (f,(x_1,\dots,x_j),x_0)$ of $X$ is \emph{$T$-accessible} if all elements $x_i$ are $T$-accessible.
A set of updates of $X$ is \emph{$T$-accessible} if every update in the set is $T$-accessible.

Further, if states $X,Y$ are $T$-similar, $\s: \V_X(T)\to \V_Y(T)$ is the similarity function,  and if $u = (f,(x_1,\dots,x_j),x_0)$ is a $T$-accessible update of $X$, define
\[ \s(u) = \big(f,(\s(x_1),\dots,\s(x_j)),\s(x_0)\big). \qefhere \]
\end{definition}

\begin{postulate}[New Bounded Exploration Postulate]\rm
There exists a finite set $T$ of \U\ terms that is closed
under subterms and such that
\begin{enumerate}
\item[(i)]
$\D(X)$ is $T$-accessible for every state $X$, and
\item[(ii)]
if states $X,Y$ are $T$-similar, $\s: \V_X(T)\to \V_Y(T)$ is the similarity function, and $u$ is an accessible update of $X$, then
\[ u\in \D(X) \iff \s(u) \in \D(Y). \]
In other words, if terms $t_0$ and $f(t_1,\ldots,t_j)$ belong to $T$, $x_i = \V_X(t_i)$, and $y_i = \V_Y(t_i)$, then
\begin{equation*}
(f,(x_1,\ldots,x_j),x_0)\in\Delta(X) \iff
 (f,(y_1,\ldots,y_j),y_0)\in\Delta(Y). \qefhere
\end{equation*}
\end{enumerate}
\end{postulate}

The original bounded exploration postulate did not require the
accessibility of updates.  The accessibility was derived \cite{G141}.

If $T\subseteq T'$ are sets of terms closed under subterms, and if $T$ is a bounded exploration witness, then $T'$ is also a bounded exploration witness.

\begin{example}\rm
We illustrate the necessity of requirement (i).
The vocabulary \U\ of our system $A$ comprises a single nonlogical
function symbol $f$ which is nullary.
Every state $X$ of $A$ has exactly two nonlogical elements, and the element $f_X$ is nonlogical; all states are initial.
Every transition of $A$ changes the value of $f_X$; if $a,b$ are the nonlogical elements of $X$ and $f_X = a$, then
$f_{\tau(X)} = b$.
Clearly, $A$ satisfies the abstract state postulates.

Let $T$ be an arbitrary set of \U\ terms. Then
$T \subseteq \set{\true, \false, \undef, f}$ and $T$ is closed under subterms.
If $X$ is a state with nonlogical elements $a,b$ and $f_X = a$,  then the unique update $(f,b)$ of $X$ is $(f,b)$ which is not $T$-accessible.
Accordingly, $T$ fails requirement~(i).
But, since there are no accessible updates, $T$  satisfies requirement~(ii).

$A$ does not satisfy the original bounded exploration postulate
either.  Indeed, let $X$ be a state with nonlogical elements $a,b$
where $f_X = a$, and let $Y$ be obtained from $X$ by replacing $b$ with a
fresh element $c$.  Then $X$ and $Y$ coincide over every set $T$ of terms but
$\D(X) = \{(f,b)\} \ne \{(f,c)\} = \D(Y)$. \qef
\end{example}

\section{Equivalence of two bounded exploration postulates}

We abbreviate ``bounded exploration'' to BE.

\begin{theorem}
Suppose that $A$ satisfies the sequential state and abstract state
postulates.  Then $A$ satisfies the new BE postulate if and only if it satisfies the original one.
\end{theorem}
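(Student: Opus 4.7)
The plan is to treat the two directions separately. The reverse implication (new BE $\Rightarrow$ original BE) is essentially bookkeeping: let $T$ be a subterm-closed witness for the new postulate and suppose $X \overset{T}{=} Y$. Then $\V_X(t) = \V_Y(t)$ for every $t \in T$, so $X, Y$ are trivially $T$-similar with similarity function $\s$ equal to the identity on $\V_X(T) = \V_Y(T)$. Clause (i) gives that $\Delta(X)$ and $\Delta(Y)$ are $T$-accessible, and clause (ii) then yields $u \in \Delta(X) \iff \s(u) = u \in \Delta(Y)$ for every such $u$, with the reverse inclusion obtained by swapping $X$ and $Y$.

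For the forward direction, I would begin with a witness $T_0$ for the original BE and enlarge it to a subterm-closed finite witness $T \supseteq T_0$, as observed in the paper just after the original postulate. Clause (i) of the new postulate is then supplied by \cite{G141} (the paper explicitly flags this: ``The accessibility was derived''), so the substantive content is clause (ii).

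To verify clause (ii), suppose $X, Y$ are $T$-similar with similarity function $\s:\V_X(T)\to\V_Y(T)$. The strategy is to produce an isomorphic copy $X'$ of $X$ whose values on $T$ coincide with those of $Y$, and then apply original BE. Since $\V_X(T)$ and $\V_Y(T)$ are finite of equal cardinality, I can choose a set $B \supseteq \V_Y(T)$ of the same cardinality as $|X|$ and a bijection $\pi:|X| \to B$ extending $\s$. Let $X'$ be the \U-structure on $B$ obtained by transporting $X$ along $\pi$, so that $\pi$ is an isomorphism $X \to X'$. The abstract state postulate guarantees $X' \in \cS$ and that $\pi$ carries $\Delta(X)$ bijectively onto $\Delta(X')$ componentwise. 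Because $\V_{X'}(t) = \pi(\V_X(t)) = \s(\V_X(t)) = \V_Y(t)$ for every $t \in T$, we have $X' \overset{T}{=} Y$, so original BE yields $\Delta(X') = \Delta(Y)$. Any $T$-accessible update $u$ of $X$ has element entries in $\V_X(T)$, where $\pi$ agrees with $\s$; hence $\pi(u) = \s(u)$ and $u \in \Delta(X) \iff \s(u) \in \Delta(Y)$, as required.

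The step I expect to be the most delicate is the cardinality bookkeeping in choosing $B$ and $\pi$: we cannot simply take $B = |Y|$ because $|X|$ and $|Y|$ may have different cardinalities, so we must pass through an auxiliary base set. Once $X'$ is in place, everything else reduces to isomorphism-invariance from the abstract state postulate and the accessibility fact imported from \cite{G141}.
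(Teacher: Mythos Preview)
Your proof is correct and follows essentially the same route as the paper: both directions match, and for the substantive implication you build an isomorphic copy $X'$ of $X$ with $X' \overset{T}{=} Y$, then invoke the original BE postulate together with the accessibility lemma from \cite{G141}. The only difference is presentational: the paper splits into two cases according to whether $\V_X(T)\cap\V_Y(T)=\emptyset$ (first passing to a disjoint isomorphic copy if not, then replacing $\V_X(t)$ by $\V_Y(t)$ and using the identity elsewhere), whereas you collapse both cases into one by choosing the auxiliary base set $B\supseteq\V_Y(T)$ and the extending bijection $\pi$ directly.
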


\begin{proof}\mbox{}

\medskip\noindent\textbf{Only if.}
We assume that $A$ satisfies the new BE postulate with some BE witness $T$, and we prove that it satisfies the original one with the same BE witness $T$.
Suppose that the states $X$ and $Y$ of $A$ coincide over $T$.
Then $X,Y$ are $T$-similar and the similarity function $\s$ is the identity function from $\V_X(T)$ onto $\V_Y(T)$.  By the new BE postulate, $\D(X) = \D(Y)$.

\medskip\noindent\textbf{If.}
We assume that $A$ satisfies the original BE postulate with a BE witness $T$.
Without loss of generality, $T$ is closed under subterms.
We prove that $A$ satisfies the original BE postulate with the same BE witness $T$.

Statement (i) is proven in \cite[Lemma~6.2]{G141}.
To prove statement~(ii), suppose that $X$ and $Y$ are $T$-similar states of $A$,
$t_0 =f(t_1,\ldots,t_j) \in T$, $x_i = \V_X(t_i)$, and $y_i = \V_Y(t_i)$.
By symmetry, it suffices to prove that $(f,(y_1,\ldots,y_j),y_0) \in
\Delta(Y)$ if $(f,(x_1,\ldots,x_j),x_0) \in \Delta(X)$.  Suppose that
$(f,(x_1,\ldots,x_j),x_0) \in \Delta(X)$.

\medskip\noindent
Case~1: $\V_X(T) \cap \V_Y(T) = \emptyset$.
Create a new state $X'$ from $X$ by replacing $\V_X(t)$ with
$\V_Y(t)$ for every $t\in T$.
States $X'$ and $Y$ coincide over
$T$.  By the old BE postulate, $\D(X') = \D(Y)$.

There is an isomorphism $\xi: X\to X'$ that coincides with the similarity function on $\V_X(T)$ and is
identity otherwise.  $\xi$ naturally lifts to locations,
updates and sets of updates, and we have
\begin{equation*}
(f,(y_1,\ldots,y_j),y_0) = \xi((f,(x_1,\ldots,x_j),x_0)) \in
\xi(\D(X)) = \D(X') = \D(Y).
\end{equation*}

\medskip\noindent
Case~2: $\V_X(T) \cap \V_Y(T) \ne \emptyset$.
Let $\eta$ be an isomorphism from $X$ to a state $X'$ of $A$ such that
$\V_{X'}(T) \cap \V_Y(T) = \emptyset$.
Lifting $\eta$ as above, we have $\V_{X'} (t_i) = \eta x_i$ and
\begin{equation*}
 (f,(\eta x_1,\ldots,\eta x_j),\eta x_0)=\eta ((f,(x_1,\ldots,x_j),x_0)) \in
\eta(\Delta(X)) = \Delta(X').
\end{equation*}
Obviously $X'$ and $Y$ are $T$-similar.  We have Case~1 with $X'$
playing the role of $X$ and $\eta x_i$ playing the role of $x_i$.  Thus $(f,(y_1,\ldots,y_j),y_0)\in\D(Y)$.
\end{proof}

\end{document}